\newtheorem{theorem}{Theorem}
\newtheorem{lemma}{Lemma}
\algnewcommand{\LineComment}[1]{\Statex \(\triangleright\) #1}
\journal{TBD}
\begin{document}

\begin{frontmatter}



\title{An efficient polynomial-time approximation scheme for parallel multi-stage open shops}


\author[a]{Jianming Dong}
\ead{djm226@163.com}
\author[b]{Ruyan Jin}
\ead{jry623@163.com}

\author[c]{Guohui Lin}
\ead{guohui@alberta.ca}
\author[d]{Bing Su}
\ead{subing684@sohu.com}
\author[e]{Weitian Tong\corref{cor1}}
\ead{wtong.research@gmail.com}
\author[e]{Yao Xu}
\ead{yxu@georgiasouthern.edu}

\cortext[cor1]{Corresponding Author}


\address[a]{Department of Mathematics, Zhejiang Sci-Tech University, Hangzhou, Zhejiang, China, 310018}
\address[b]{Public course teaching and Research  Office, Taizhou Vocational and Technical College, Taizhou, Zhejiang, China, 318000}
\address[c]{Department of Computing Science, University of Alberta, Edmonton, Alberta, Canada, T6G 2E8}
\address[d]{School of Economics and Management, Xi'an Technological University, Xi’an, China}
\address[e]{Department of Computer Science, Georgia Southern University, Statesboro, GA, USA, 30458}

\begin{abstract}
Various new scheduling problems have been arising from practical production processes 
and spawning new research areas in the scheduling field.
We study the parallel multi-stage open shops problem,
which generalizes the classic open shop scheduling and parallel machine scheduling problems. 
Given $m$ identical $k$-stage open shops and a set of $n$ jobs, 
we aim to process all jobs on these open shops 
with the minimum makespan, i.e., the completion time of the last job,
under the constraint
that job preemption is not allowed.
We present an efficient polynomial-time approximation scheme (EPTAS) 
for the case when both $m$ and $k$ are constant.
The main idea for our EPTAS is the combination of 
several categorization, scaling, and linear programming rounding techniques. 
Jobs and/or operations are first scaled and then categorized carefully into multiple types so that 
different types of jobs and/or operations are scheduled appropriately 
without increasing the makespan too much. 
\end{abstract}



\begin{keyword}



Scheduling; Parallel multi-stage open shops; Makespan; Linear programming; Efficient polynomial-time approximation scheme

\end{keyword}

\end{frontmatter}



\section{Introduction}\label{sec_introduction}

Scheduling plays a crucial role in manufacturing and service industries.
Incorporating new features that reflect modern industrial processes to classic models 
has been receiving considerable attentions from researchers. 
Open shop scheduling \cite{GS76} is one of the main branches of scheduling problems.  
An open shop consists of multiple machines (or processors), 
each of which processes a different operation (or task) of every job,
and every job has to go though all machines without ordering restrictions on which operation should be processed first.  
Considering the situation where multiple identical open shops are serving in production processes, 
we investigate \emph{$m$-parallel $k$-stage open shops} problem.

Formally, we are given $m$ identical $k$-stage open shops $\{S_1, S_2, \ldots, S_m \}$ 
and a set of $n$ jobs $\mathcal{J}=\{J_1, J_2, \ldots, J_n\}$.
Each $k$-stage open shop consists of $k$ machines.
Denote the open shop $S_\ell$ by its $k$ machines, i.e., $S_\ell = \{M_{\ell, 1}, M_{\ell, 2}, \ldots, M_{\ell, k} \}$.
Every job $J_i$ contains $k$ operations $\{O_{i, 1}, O_{i, 2}, \ldots, O_{i, k}\}$
and the operation $O_{i, j}$ needs to be processed by the machine $M_{\ell, j}$
if $J_i$ is assigned to the open shop $S_\ell$.
Without loss of generality, assume $O_{i, j}$ requires to occupy a $j$-th stage machine $p_{i, j}$ time non-preemptively, 
$\forall ~i\in [n], j \in [k]$.
Here $[x]$ denotes the number set $\{1, 2, \ldots, x\}$ for any positive integer $x$.
Then we represent each job $J_i$ as a $k$-tuple $(p_{i,1}, p_{i,2}, \ldots, p_{i,k})$.
Once a job is assigned to an open shop, it has to stay on this open shop until the job is completed. 
The goal is to minimize the \emph{makespan}, i.e., the completion time of the last job. 
Following the standard $3$-field notation $\alpha \mid \beta \mid \gamma$ introduced by Graham et al. \cite{GLL79},
where 
$\alpha$ describes the machine environment, 
$\beta$ refers to job characteristics, 
and $\gamma$ indicates objectives or optimality criteria,
the \emph{$m$-parallel $k$-stage open shops} problem is represented as $P_m(O_k) \mid \mid C_{\max}$
when both $m$ and $k$ are constant.

%
For a minimization optimization problem $\Pi$, 
a polynomial-time algorithm ${\cal A}$ has an approximation ratio $\alpha$ where
$\alpha = \min_{I \in \mathcal{I}} \{ {\cal A}(I)/{\mbox{OPT}(I)} \}$, 
where $\mathcal{I}$ is the set of all instances. 
If there is a family of algorithms $\{ \mathcal{A}_{\epsilon}, \epsilon > 0\}$ such that 
each algorithm $\mathcal{A}_{\epsilon}$ has an approximation ratio of $1 + \epsilon$ for any $\epsilon > 0$
and its time complexity is $O(n^{f(1/\epsilon)})$ for the instance size $n$ and some polynomial-time computable function $f$,
then the minimization problem $\Pi$ admits a \emph{polynomial-time approximation scheme} (PTAS).
We say a PTAS is an \emph{efficient polynomial-time approximation scheme} (EPTAS) if 
the running time of $\mathcal{A}_{\epsilon}$ is in the form of $f(1/\epsilon)\cdot n^{O(1)}$.
A \emph{fully polynomial-time approximation scheme} (FPTAS) is a PTAS with
its running time in form of $(1/\epsilon)^{O(1)} \cdot n^{O(1)}$,
i.e., a polynomial in $\frac{1}{\epsilon}$ and $n$.

Previously, only the case when $k=2$ was investigated for $P_m(O_k) \mid \mid C_{\max}$.
Chen et al. \cite{CZC13} introduced the parallel open shops problem $P_m(O_2) \mid \mid C_{\max}$ for the first time.
Inspired by the observation found by Gonzalez and Sahni \cite{GS76}
that the optimal makespan of $O_2 \mid \mid C_{\max}$ is either determined by a single job 
or determined by the total running time on one machine of the open shop,
Chen et al. \cite{CZC13} proposed a $3/2$-approximation algorithm for $P_2(O_2) \mid \mid C_{\max}$
and a $2$-approximation algorithm for $P_m(O_2) \mid \mid C_{\max}$ for any $m$ 
even when $m$ is part of the input.
Both these two approximation algorithms run in $O(n \log n)$ time. 
Later, Dong et al. \cite{DJH19} presented a pseudo-polynomial time dynamic programming algorithm
and adopted a standard scaling technique to develop an FPTAS for $P_m(O_2) \mid \mid C_{\max}$ 
when the number of open shops is any fixed constant.
Dong et al.'s algorithm was inspired 
by Gonzalez and Sahni \cite{GS76}'s linear-time optimal algorithm for $O_2 \mid \mid C_{\max}$,
the two-stage open flowshop scheduling with makespan minimization.
Because $O_2 \mid \mid C_{\max}$ can be solved optimally, 
Dong et al. \cite{DJH19} defined an important concept so-called \emph{critical-job}
to help determining the makespan on one two-stage open shop quantitatively.
When $k\geq 3$, 
the multi-stage open flowshop scheduling with makespan minimization (or $O_k \mid \mid C_{\max}$) 
becomes (weakly) NP-hard \cite{GS76},
which makes the problem $O_k \mid \mid C_{\max}$ loss good properties related to \emph{critical-job}.
Thus, Dong et al.'s idea for FPTAS \cite{DJH19} cannot be extended to the general $P_m(O_k) \mid \mid C_{\max}$
even when both $m$ and $k$ are constants.

Our main contribution is an EPTAS for $P_m(O_k) \mid \mid C_{\max}$. 
We combine several categorization, scaling, and linear programming rounding techniques,
some of which are inspired by the PTAS for the open shop problem $O_k \mid \mid C_{\max}$ \cite{SW98}
and the PTAS for the flow shop problem $F_k \mid \mid C_{\max}$  \cite{Hal98}.
First, the processing time of each job is scaled in a way 
such that we only need to consider the schedules with makespan at most $1$.
Then the job set is categorized into a set of big jobs and a set of small jobs.
This categorization makes sure the number of big jobs is a constant,
which enables to enumerate all possible schedules of big jobs in constant time.
Next, the operations of small jobs are categorized carefully into different types, 
which will be scheduled in different phases. 
Roughly, the operation of a small job is 
either simply scheduled to the end of the current schedule 
or ``densely'' fitted into the ``gaps'' or idle time intervals introduced by the schedule of big jobs.
The above scaling and categorization steps aim to define an abstract representative for each feasible schedule.
Instead of inspecting an exponential number of all feasible schedules, 
our EPTAS only checks a polynomial number of abstract representatives
and converts each abstract representative into a near-optimal feasible schedule within the represented group of schedules.
To obtain the smallest possible makespan without increasing the time complexity too much, 
each step of our EPTAS is well-designed and 
the overall analysis is involved. 
More details can be found in Section \ref{sec_eptas}.

The remaining context is organized as follows. 
In Section \ref{sec_relatedwork}, we review most related works. 
Section \ref{sec_preliminary} introduces necessary terminologies, concepts, and preprocessing steps.
The details of our EPTAS and its analysis are provided in Section \ref{sec_eptas}.
Finally, we make a conclusion in Section \ref{sec_conclusion}.

\section{Related Work}\label{sec_relatedwork}
%
It is easy to observe that $P_m(O_k) \mid \mid C_{\max}$ reduces 
to the $k$-stage open shop scheduling problem (denoted as $O_k \mid\mid C_{\max}$)  \cite{GS76} when $m=1$
and to the $m$-parallel identical machine scheduling problem (denoted as $P_m \mid\mid C_{\max}$) \cite{GJ79} when $k = 1$.
Both have been explored extensively in the literature.

For $O_k \mid\mid C_{\max}$, it admits a linear-time optimal algorithm when $k=2$
but becomes weakly NP-hard when $k\geq 3$ \cite{GS76}.
It is still an open question whether $O_k \mid\mid C_{\max}$ with $k \geq 3$ is strongly NP-hard \cite{Woe18}. 
Sevastianov and Woeginger \cite{SW98} presented a PTAS for the case
when $k \geq 3$ is a constant.
This problem is denoted as $O \mid\mid C_{\max}$ when $k$ is part of the input.
Williamson et al. \cite{WHH97} showed its strong NP-hardness and an inapproximability of $1.25$.
For other results on the open shop problem, we refer the readers to the survey by Ellur and Ramasamy \cite{AP15}.

For $P_m \mid\mid C_{\max}$, 
Graham \cite{Gra66} proposed the famous {\sc List-Scheduling} algorithm with an approximation ratio $2-1/m$,
for arbitrary $m$.
When $m\geq 2$ is a constant, Garey and Johnson \cite{GJ79} 
proved $P_m \mid\mid C_{\max}$ is weakly NP-hard and
designed a pseudo-polynomial time exact algorithm.
Sahni \cite{Sah76} presented an FPTAS. 
Hochbaum and Shmoys \cite{HS87} studied the case when $m$ is part of the input, denoted as $P \mid\mid C_{\max}$.
They proved its strong NP-hardness and proposed a PTAS.
Please refer to the survey \cite{Mok01} for more results.

Chen and Strusevich \cite{CS93} introduced 
the \emph{multiprocessor open shop} problem,
which generalizes $O_k \mid\mid C_{\max}$ and $P_m \mid\mid C_{\max}$ in another way.
Instead of allowing multiple identical open shops in $P_m(O_k) \mid\mid C_{\max}$, 
the multiprocessor open shop problem considers only one open shop and 
allows each stage to have multiple identical machines. 
The multiprocessor open shop problem is also studied under the names 
\emph{flexible open shop}, \emph{multi-machine open shop}, 
or \emph{open shop with parallel machines}.
Under the standard $3$-field notation, the multiprocessor open shop problem can be denoted as $O_k(P) \mid\mid C_{\max}$.
Chen and Strusevich \cite{CS93} designed a $(2-2/m^2)$-approximation algorithm for $O_2(P) \mid\mid C_{\max}$,
where $m$ is the upper bound of the number of machines in each stage. 
For the case when $k$ is part of input, Chen and Strusevich \cite{CS93} presented a $(2+\epsilon)$-approximation algorithm. 
These two approximation ratios were improved to $3/2+\epsilon$ and $2$ respectively by Woeginger \cite{SW99}.
When $k$ is a constant and the number of machines in each stage is part of input, 
Jansen and Sviridenko \cite{JS00} designed a PTAS.
For more results on the multiprocessor open shop problem and its variants, 
we would like to refer readers to a recent survey by Adak et al. \cite{AAB20}.

Changing the open shop environment to flow shop environment in $P_m(O_k) \mid\mid C_{\max}$ and $O_k(P) \mid\mid C_{\max}$ 
results in two more variants, 
i.e. the \emph{$m$-parallel $k$-stage flow shops} problem and the \emph{multiprocessor flow shop} problem 
denoted as $P_m(F_k) \mid\mid C_{\max}$ and $F_k(P) \mid\mid C_{\max}$ respectively under the $3$-field notation.
It has attracted quite a few attentions \cite{Kov85,HKA96,VE00,CC03,Als04,ZV12,DTL17,DJL20,TMG18,WCW19a,WCW19b,WCW20a,WCW20b}. 
For the case $k=2$, 
FPTASes were designed independently by Kovalyov \cite{Kov85}, Dong et al. \cite{DTL17}, and Wu et al. \cite{WCW19a} 
when $m$ is a fixed constant;
and Dong et al. \cite{DJL20} presented a PTAS when $m$ is part of the input.
For the case $k\geq 3$, 
Tong et al. \cite{TMG18} proposed a PTAS when both $m$ and $k$ are constant.
All the mentioned FPTASes and PTASes are best possible approximability result
as the \emph{$m$-parallel $k$-stage flow shops} problem is weakly NP-hard when $m\geq 2$ 
and becomes strongly NP-hard if $m$ is part of the input or $k\geq 3$ \cite{DJL20}. 
$F_k(P) \mid\mid C_{\max}$ and its special cases also attracted considerable attentions from researchers,
including but not limited to \cite{Gup88,GT91,Che95,HLV96,GHP97,Hal98,SW00,Wan05,RV10}. 
In particular, Hall \cite{Hal98} claimed a PTAS for $F_k(P) \mid\mid C_{\max}$ 
when $k$ is constant and the number of machines in each stage is also constant.
Ruiz and V{\'a}zquez-Rodr{\'\i}guez \cite{RV10} surveyed plenty of heuristics for the general $F_k(P) \mid\mid C_{\max}$ problem.

\section{Preliminary}\label{sec_preliminary}

%
Recall that each operation $O_{i, j}$, $i \in [n], j \in [k]$, has a processing time $p_{i, j}$. 
Define $P_{i} = \sum_{j=1}^{k} p_{i,j}$ to be the total processing time of job $J_{i}$ on an open shop 
and $P = \sum_{i=1}^{n} P_{i}$ to be the total processing time of all jobs.
Suppose a $P_m(O_k) || C_{\max}$ instance is denoted by its job set $\mathcal{J}$.
Given any instance $\mathcal{J}$,
let $\pi(\mathcal{J})$ denote a (feasible) schedule 
and $C_{\max}^{\pi(\mathcal{J})}$ be its makespan. 
Similarly, we define $\pi^*(\mathcal{J})$
and $C_{\max}^{\pi^*(\mathcal{J})}$ as the optimum schedule and its makespan respectively.
$\mathcal{J}$ will be omitted if there is no confusion in the context.

Because every job needs to go through all machines of a $k$-stage open shop,
we have $C_{\max}^{\pi^*} \geq \max\limits_{1\leq i\leq n}{P_{i}}$.
As the average workload is $\frac{P}{mk}$,
$C_{\max}^{\pi^*} \geq \frac{P}{mk}$.
On the other hand, an instance of the $m$ parallel machine scheduling problem can be constructed 
by adding a constraint that every open shop must complete a job before starting processing another job.
Then the famous \textsc{List-Scheduling} algorithm \cite{Gra66}
produces a schedule with makespan at most 
$\frac{P}{m}+ (1-\frac{1}{m})\max\limits_{1\leq i\leq n}{P_{i}} < \frac{P}{m}+\max\limits_{1\leq i\leq n}{P_{i}}$.
To wrap up, $C_{\max}^{\pi^*}$ for parallel open shops can be bounded in Lemma \ref{lemma_opt_bounds},
\begin{lemma}\label{lemma_opt_bounds}
For the $P_m(O_k) || C_{\max}$ problem, we have the following bounds for the minimum makespan $C_{\max}^{\pi^*}$.
\begin{equation}
\label{eq1}
\max\left\{\displaystyle\frac{P}{mk}, \max\limits_{1\leq i\leq n}{P_{i}}\right\} 
\leq C_{\max}^{\pi^*}
\leq \displaystyle\frac{P}{m}+\max\limits_{1\leq i\leq n}{P_{i}}
\end{equation}
\end{lemma}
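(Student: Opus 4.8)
The plan is to prove the two halves of \eqref{eq1} independently, each by an elementary counting or reduction argument; the inline reasoning preceding the statement already anticipates most of the work. For the left-hand inequality I would establish the two lower bounds separately and then take their maximum. First, in any feasible schedule each job $J_i$ is processed non-preemptively on all $k$ machines of whichever open shop it is assigned to, and a job can occupy at most one machine at any instant, so its completion time is at least $P_i = \sum_{j=1}^{k} p_{i,j}$; hence $C_{\max}^{\pi^*} \geq \max_{1\leq i\leq n} P_i$. Second, I would count total work: there are exactly $mk$ machines across the $m$ open shops, the grand total processing requirement is $P$, and no machine can perform more than $C_{\max}^{\pi^*}$ units of work, so $mk\cdot C_{\max}^{\pi^*} \geq P$, giving $C_{\max}^{\pi^*} \geq \frac{P}{mk}$.

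For the upper bound I would exhibit one concrete feasible schedule and bound its makespan from above. The idea is to relax the open-shop freedom by forcing each open shop to serve its assigned jobs one at a time, a job occupying the entire open shop for a contiguous block of length $P_i$ before the next job starts. This reduces the task to scheduling $n$ items of lengths $P_1,\ldots,P_n$ on $m$ identical parallel machines, and any schedule produced for that relaxed instance is automatically feasible for $P_m(O_k)\mid\mid C_{\max}$ with exactly the same makespan. Applying Graham's \textsc{List-Scheduling} \cite{Gra66} to the relaxed instance yields makespan at most $\frac{P}{m} + \left(1-\frac{1}{m}\right)\max_{1\leq i\leq n} P_i < \frac{P}{m} + \max_{1\leq i\leq n} P_i$. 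Since $C_{\max}^{\pi^*}$ is the optimum over all feasible schedules, it cannot exceed the makespan of this particular one, which delivers the right-hand inequality.

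The only step demanding genuine care is the reduction behind the upper bound: I must confirm that serializing jobs within an open shop really produces a valid open-shop schedule, i.e.\ that a single job, granted exclusive use of all $k$ stages for a block of length $P_i$, can complete all $k$ of its operations within that block. This holds because a lone job can be run on an open shop in time exactly $P_i$ by processing its operations one after another in any order, as no two operations of the same job may run concurrently. Once this feasibility observation is in place, everything else is the standard averaging inequality and the classical \textsc{List-Scheduling} guarantee, so I do not anticipate any further obstacle.
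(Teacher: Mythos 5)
Your proposal is correct and follows essentially the same route as the paper: the lower bound via the per-job bound $C_{\max}^{\pi^*}\geq \max_i P_i$ together with the averaging argument $C_{\max}^{\pi^*}\geq P/(mk)$, and the upper bound by serializing each open shop so the instance reduces to $P_m\mid\mid C_{\max}$ and invoking Graham's \textsc{List-Scheduling} guarantee. Your added check that a lone job can finish all $k$ operations within a block of length $P_i$ is a sensible explicit justification of a step the paper leaves implicit.
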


We scale the processing time of each operation by
dividing $2\cdot \max \left\{ \frac{P}{m}, \max\limits_{1\leq i\leq n}{P_{i}} \right\}$.
Then the optimal makespan $C_{\max}^{*}$ can be normalized as 
\begin{equation}\label{eq_scale}
\frac{1}{2k} \leq C_{\max}^{\pi^*} \leq 1.
\end{equation}

In the sequel, we consider only those feasible schedules having a makespan less than or equal to $1$.

Using a parameter $\gamma$, which will be determined later in Algorithm {\sc Reduce}, 
we categorize the job set $\mathcal{J}$ into two sets, 
named as \emph{big job} set $\mathcal{B}^{\gamma}$ and \emph{small job} set $\mathcal{S}^{\gamma}$, respectively. 
If $\gamma$ is clear from the context, we may leave out the superscript for simplicity. 
\begin{eqnarray*}
\mathcal{B}^{\gamma} & = & \{J_{i}\in \mathcal{J} \mid \exists ~j \in[k], ~p_{i,j} \geq \gamma \}, \\
\mathcal{S}^{\gamma} & = & \mathcal{J} - \mathcal{B}^{\gamma}.
\end{eqnarray*}

As there are $mk$ machines in total and each machine has a load at most $1$,
the total processing time of all jobs is at most $mk$.
Each big job contains at least one operation with processing time greater than $\gamma$.
Thus, the number of big jobs is at most $mk/\gamma$,
which is summarized in Lemma \ref{lemma_num_big_jobs}.

\begin{lemma}\label{lemma_num_big_jobs}
Given $\gamma$, there are at most $ mk/\gamma $ big jobs.
\end{lemma}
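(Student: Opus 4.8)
The plan is a direct counting argument resting on the total-workload bound that was established immediately before the statement. First I would record the aggregate workload inequality: because we have restricted attention to feasible schedules whose makespan is at most $1$ (via the normalization in \eqref{eq_scale}), each of the $mk$ machines is busy for a total time of at most $1$, and since the sum of machine busy times equals the sum of all operation lengths, we obtain $P = \sum_{i=1}^{n} P_i \leq mk$.

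Next I would exploit the defining property of a big job. By definition every $J_i \in \mathcal{B}^{\gamma}$ possesses at least one operation $O_{i,j}$ with $p_{i,j} \geq \gamma$, so its total processing time satisfies $P_i = \sum_{j=1}^{k} p_{i,j} \geq p_{i,j} \geq \gamma$. Summing this per-job lower bound over all big jobs and combining it with the workload bound yields
\begin{equation*}
|\mathcal{B}^{\gamma}| \cdot \gamma \;\leq\; \sum_{J_i \in \mathcal{B}^{\gamma}} P_i \;\leq\; \sum_{i=1}^{n} P_i \;=\; P \;\leq\; mk,
\end{equation*}
and dividing through by $\gamma$ gives $|\mathcal{B}^{\gamma}| \leq mk/\gamma$, which is exactly the claim.

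Since both ingredients are elementary inequalities, there is no genuine obstacle in this proof. The only point requiring care is that the bound $P \leq mk$ is valid only after the scaling step normalizes the makespan to at most $1$; I would therefore make sure to invoke that normalization explicitly so the counting argument is properly grounded rather than assumed.
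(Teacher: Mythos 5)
Your proof is correct and follows essentially the same argument as the paper: the normalization bounds the total workload by $mk$, each big job contributes at least $\gamma$ to that total, and dividing gives the bound. Your write-up is simply a more explicit version of the paper's two-sentence justification preceding the lemma.
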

%
%

\section{Algorithm description and analysis}\label{sec_eptas}
%
%
In this section, an EPTAS for the $P_m(O_k) || C_{\max}$ problem is presented.
We name it as {\sc Reduce-Assign-Dense-Greedy}, 
because it mainly consists of four procedures:
a reduction procedure to massage the given instance into a more structured one;
an assignment procedure to generate a ``rough'' and possibly infeasible schedule (named as assignment) for most jobs, 
a dense scheduling procedure to turn the assignment into a feasible schedule for those jobs;
a greedy procedure to greedily process the unassigned jobs and/or operations.
Roughly, the main purpose of the first two procedures is to categorize the feasible schedules into groups
and extract an abstract representative for each group of schedules. 
The last two procedures recover a feasible schedule from the abstract representative
such that the makespan is near-optimal with respect to the represented group of schedules. 
Detailed description and analysis for each procedure is presented 
in Sections \ref{subsec_reduce}, \ref{subsec_dense}, \ref{subsec_assign}, and \ref{subsec_eptas}, respectively.
%

\subsection{Description and Analysis for {\sc Reduce}}\label{subsec_reduce} 

Recall that any operation of a small job can be processed within $\gamma$ time by definition.
We further categorize the operations of small jobs into two types.
More specifically, 
we call an operation \emph{$\gamma^2$-small} if its processing time is less than $\gamma^2$, 
or \emph{$\gamma^2$-big} otherwise. 

Algorithm {\sc Reduce} identifies a specific value of $\gamma$ to categorize the job set.
Then, it converts any $P_m(O_k) || C_{\max}$ instance $\mathcal{J}$ into another instance $\mathcal{J}'$
such that 
(1) every operation of small jobs in $\mathcal{J}'$ has a processing time less than than $\gamma^2$;
(2) the processing time of every operation of big jobs in $\mathcal{J}'$ is a multiple of $\gamma^2$;
(3) the optimal makespan is increased by a little amount.

Consider a sequence of candidates for $\gamma$ 
\begin{equation}
\gamma_x = \delta^{2^x}, ~ \delta = \frac{\epsilon}{14 mk^3}, ~x \in \mathbb{Z}^{\geq 0}.
\end{equation}
We will search for an $x$ and assign $\gamma_x$ to $\gamma$.
We choose the value of $\epsilon$ such that $1/\epsilon$ is an integer, 
which implies both $mk/\delta$ and $mk/{\gamma_x}$ are integers as well.
Given a $\gamma_x$, 
we first categorize jobs into $\mathcal{B}^{\gamma_x}$ and $\mathcal{S}^{\gamma_x}$
and then define \emph{$\gamma_x^2$-big} and \emph{$\gamma_x^2$-small} operations for jobs in $\mathcal{S}^{\gamma_x}$.
%
%

\begin{center}
\begin{minipage}{0.9\textwidth}
\begin{algorithm}[H]
\caption*{\sc Reduce}
\vspace{-10pt}
\begin{flushleft}
\textbf{Input:} a general $P_m(O_k) || C_{\max}$ instance $\mathcal{J}$, parameter $\epsilon$;\\
\textbf{Output:} parameter $\gamma$, $\gamma^2$-big operation set $\mathcal{O}$,
        a special $P_m(O_k) || C_{\max}$ instance $\mathcal{J}'$ with 
        (1) $p_{i,j} \leq \gamma^2$ for $\forall j \in [k]$, $J_i \in \mathcal{S}^{\gamma}$;
        (2) $p_{i,j}$ is a multiple of $\gamma^2$ for $\forall j \in [k]$, $J_i \in \mathcal{B}^{\gamma}$.
\end{flushleft}
\vspace{-20pt}
\hrulefill
\begin{algorithmic}[1]
    \State Set $\delta = \frac{\epsilon}{14 mk^3}$;  
    \State Define a sequence of real numbers 
            $\gamma_x = \delta^{2^x}, x \in \mathbb{Z}^{\geq 0}$;
    \For{$x \in \{0, 1, \ldots, \frac{mk}{\delta}\}$} 
        \State Categorize $\mathcal{J}$ into $\mathcal{B}^{\gamma_x}$ and $\mathcal{S}^{\gamma_x}$;
        \State Let $\mathcal{O}$ denote the set of all $\gamma^2$-big operations;
        \State Let $L(\gamma_x)$ denote the total processing time of all operations in $\mathcal{O}_x$;
        \If{$L(\gamma_x) \leq \delta$}
            \State $\gamma \leftarrow \gamma_x$ 
            \For{$O_{i,j} \in \mathcal{O}$}
                \State $p_{i,j} \leftarrow 0$;
            \EndFor 
            \For{$J_i \in \mathcal{B}^{\gamma}$ and $O_{i,j} \in J_i$}
                \State $p_{i,j} \leftarrow \lceil p_{i,j} / \gamma^2 \rceil \cdot \gamma^2$; 
                \Comment Round $p_{i,j}$ up to the nearest multiple of $\gamma^2$
            \EndFor
            \State $\mathcal{J}' \leftarrow \mathcal{B}^{\gamma} \cup \mathcal{S}^{\gamma}$; 
            \State \Return $\gamma$, $\mathcal{O}$, and the new job set $\mathcal{J}'$. 
        \EndIf
    \EndFor
\end{algorithmic}
\end{algorithm}
\end{minipage}
\end{center}

\begin{lemma}\label{lemma_gamma}
Each operation in any job is $\gamma_x^2$-big for at most one index $x$.
\end{lemma}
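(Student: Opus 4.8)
The plan is to reduce the statement to the elementary fact that a fixed real number lies in at most one member of a pairwise-disjoint family of intervals. The crucial first step is to record the algebraic identity $\gamma_x^2 = \bigl(\delta^{2^x}\bigr)^2 = \delta^{2^{x+1}} = \gamma_{x+1}$, which reveals that the cutoff $\gamma_x^2$ separating $\gamma_x^2$-small from $\gamma_x^2$-big operations is exactly the next candidate $\gamma_{x+1}$ in the sequence. Since $1/\epsilon$ is a positive integer and $m,k \geq 1$ force $\delta = \epsilon/(14mk^3) \in (0,1)$, the candidates form a strictly decreasing sequence $\gamma_0 > \gamma_1 > \gamma_2 > \cdots > 0$.

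Next I would determine precisely which processing times allow a given operation $O_{i,j}$ to be $\gamma_x^2$-big, extracting two necessary inequalities. Because the $\gamma_x^2$-big/$\gamma_x^2$-small dichotomy is defined only on operations of \emph{small} jobs, being $\gamma_x^2$-big forces $J_i \in \mathcal{S}^{\gamma_x}$, so every operation of $J_i$ --- in particular $O_{i,j}$ --- satisfies $p_{i,j} < \gamma_x$. The defining inequality of $\gamma_x^2$-big supplies the complementary bound $p_{i,j} \geq \gamma_x^2 = \gamma_{x+1}$. Together these give $p_{i,j} \in [\gamma_{x+1}, \gamma_x)$. By the strict monotonicity above, the intervals $\{[\gamma_{x+1}, \gamma_x)\}_{x \geq 0}$ are pairwise disjoint, so the single value $p_{i,j}$ meets the membership condition for at most one index $x$, which is the desired conclusion.

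The step I would guard most carefully is the origin of the upper bound $p_{i,j} < \gamma_x$: it comes from the membership $J_i \in \mathcal{S}^{\gamma_x}$, not from the $\gamma_x^2$-big condition itself. This distinction is essential, because the lower bound $p_{i,j} \geq \gamma_x^2 = \gamma_{x+1}$ alone holds for every sufficiently large $x$ (the thresholds $\gamma_{x+1}$ tend to $0$), so omitting the small-job requirement would destroy uniqueness. Making the role of small-job membership explicit is exactly what converts an unbounded family of one-sided conditions into a single nested interval, and it is the only genuinely conceptual point in an otherwise routine argument.
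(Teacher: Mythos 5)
Your proof is correct and follows essentially the same route as the paper's: both hinge on the identity $\gamma_x^2 = \gamma_{x+1}$ together with the observation that $\gamma_x^2$-bigness forces $J_i \in \mathcal{S}^{\gamma_x}$, yielding $p_{i,j} \in [\gamma_{x+1}, \gamma_x)$. Your packaging of the paper's two-case analysis ($y>x$ and $y<x$) as disjointness of the consecutive half-open intervals is a clean equivalent reformulation, not a different argument.
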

\begin{proof}
Consider any job $J_i$ and follow the introduced rules for job categorization and operation categorization,
if there is any operation of $J_i$ is $\gamma_x^2$-big for an index $x$,
$J_i$ must be categorized as a small job by $\gamma_x$.
That is, 
$\forall ~j \in [k]$, $p_{i, j} < \gamma_x$ and 
$\exists ~j' \in [k]$, $p_{i, j'} \geq \gamma_x^2$. 
We argue that any operation of $J_i$ cannot be $\gamma_y^2$-big for any $y \not= x$.
Recall that $\gamma_x = \delta^{2^x}$ and $\delta = \frac{\epsilon}{14 mk^3} < 1$.

For any index $y > x$,  
the job $J_i$ is categorized as a big job with respect to $\gamma_{y}$ 
as  $\exists ~j' \in [k]$, $p_{i, j'} \geq \gamma_x^2 = \gamma_{x+1} \geq \gamma_{y}$.
Therefore, $J_i$ has no $\gamma_y^2$-big operations.

For any index $y < x$, 
the job $J_i$ is categorized as a small job with respect to $\gamma_{y}$ but 
every operation of $J_i$ has a processing time less than $\gamma_x = \gamma_{x-1}^2 \leq \gamma_{y}^2$,
which implies $J_i$ has no $\gamma_{y}^2$-big operations.

This completes the proof.
\end{proof}

\begin{lemma}\label{lemma_reduce}
Algorithm {\sc Reduce} terminates in $O(n/\epsilon)$  time
and generates a special instance $\mathcal{J}'$ with its optimal makespan bounded by
$$C_{\max}^{\pi^*(\mathcal{J}')} \leq C_{\max}^{\pi^*(\mathcal{J})} + mk^2\gamma.$$
\end{lemma}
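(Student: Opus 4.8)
The plan is to establish the two assertions separately: first that the for-loop in {\sc Reduce} always exits (within the stated running time), and then that the rounding performed upon exit inflates the optimal makespan by at most $mk^2\gamma$.

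For termination, I would show that some candidate $\gamma_x$ must pass the test $L(\gamma_x)\le\delta$. The key input is Lemma~\ref{lemma_gamma}: since each operation is $\gamma_x^2$-big for at most one index $x$, the sets $\mathcal{O}_x$ of $\gamma_x^2$-big operations are pairwise disjoint, so $\sum_x L(\gamma_x)$ never exceeds the total scaled processing time $P$, which after scaling satisfies $P\le mk$ (recall $P/(mk)\le C_{\max}^{\pi^*}\le 1$). If every one of the $mk/\delta+1$ candidates failed the test, i.e. $L(\gamma_x)>\delta$ for all $x\in\{0,\dots,mk/\delta\}$, then $\sum_x L(\gamma_x)>\delta\,(mk/\delta+1)>mk$, a contradiction; hence the loop exits at some $x$. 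For the running time, the loop runs $mk/\delta+1=O(1/\epsilon)$ times (with $m,k$ constant and $\delta=\epsilon/(14mk^3)$), and each iteration only categorizes the $O(nk)$ operations and sums $L(\gamma_x)$ in $O(nk)$ time, for a total of $O(n/\epsilon)$.

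For the makespan bound, I would exhibit one feasible schedule of $\mathcal{J}'$ built from an optimal schedule $\pi^*(\mathcal{J})$ of $\mathcal{J}$; since $C_{\max}^{\pi^*(\mathcal{J}')}$ is the optimum for $\mathcal{J}'$, any feasible schedule upper-bounds it. Starting from $\pi^*(\mathcal{J})$, first set the duration of every $\gamma^2$-big small operation to $0$ while keeping its start time, which cannot increase the makespan. It then remains to absorb the rounding of the big-job operations, where each grows by $\delta_{i,j}=\lceil p_{i,j}/\gamma^2\rceil\gamma^2-p_{i,j}<\gamma^2$. The counting is immediate: by Lemma~\ref{lemma_num_big_jobs} there are at most $mk/\gamma$ big jobs, each with $k$ operations, so the total growth is at most $(mk/\gamma)\cdot k\cdot\gamma^2=mk^2\gamma$.

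I expect the technical heart to be converting this total volume increase into an equal \emph{additive} makespan increase while respecting the open-shop constraints. I would process the big-job operations one at a time and, for each operation $O_{i,j}$ ending at time $e$ in the current schedule, perform a global right-shift: extend $O_{i,j}$ to its rounded length and delay every operation (on every machine, in every open shop) whose start time is at least $e$ by $\delta_{i,j}$, leaving all operations starting before $e$ in place. This is the main obstacle, since one must verify feasibility: on each machine the only pair that could collide is a stay-put operation followed by a shifted one, whose gap only widens; within a job no two operations are simultaneously in progress, so the extended $O_{i,j}$ cannot clash with another operation of $J_i$, and the same stay/shift ordering argument rules out clashes among the remaining operations of every job. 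Each such shift raises the makespan by at most $\delta_{i,j}$, so after all big-job operations are processed the makespan has grown by at most $\sum_{i,j}\delta_{i,j}\le mk^2\gamma$, yielding $C_{\max}^{\pi^*(\mathcal{J}')}\le C_{\max}^{\pi^*(\mathcal{J})}+mk^2\gamma$.
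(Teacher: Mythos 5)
Your proposal is correct and follows essentially the same route as the paper: the same counting argument via Lemma~\ref{lemma_gamma} and the bound $P\le mk$ for termination, and the same construction (zero out $\gamma^2$-big operations, then absorb each big-job rounding increment of less than $\gamma^2$ by right-shifting the later part of the schedule, for at most $mk^2/\gamma$ increments totalling $mk^2\gamma$). The only difference is that you spell out the feasibility of the right-shift in detail, which the paper's proof leaves implicit.
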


\begin{proof}

We first analyze the time complexity of {\sc Reduce}.
Let $L(\gamma_x)$ denote the total processing time of all $\gamma_x^2$-big operations. 
Assume $L(\gamma_x) > \delta$ holds for all $x \in \{0, 1, \ldots, \frac{mk}{\delta}\}$.
By Lemma \ref{lemma_gamma},
the total processing time of all operations will be greater than
$\sum_{x=0}^{\frac{mk}{\delta}} \delta > mk$,
which contradicts to the fact that 
the total processing time of all operations is at most $mk$.
Therefore, there exists one $x < \frac{mk}{\delta}$ such that $L(\gamma_x) \leq \delta$.
Considering $\delta = \frac{\epsilon}{14 mk^3}$,
it takes $O(1/\epsilon)$ iterations to find such a special index $x$.
Since each iteration of the first for-loop takes $O(n)$ to compute $L(\cdot)$,
the first for-loop takes $O(n/\epsilon)$ time.
It is easy to observe that the second and third for-loops take $O(n)$ time in total.
Therefore, the time complexity of {\sc Reduce} is $O(n/\epsilon)$.

Now we show the upper bound of the optimal makespan of the constructed instance $\mathcal{J}'$.
Turning the processing time of each $\gamma^2$-big operation to zero 
does not increase the makespan.
Consider an optimal schedule $\pi^*(\mathcal{J})$ 
and assume jobs are scheduled according to $\pi^*(\mathcal{J})$.
Starting from time $0$, 
we scan through the operations 
on each open shop.
For every operation of big jobs, 
delaying the remaining schedule by less than $\gamma^2$ will generate 
enough space to process the rounded operation. 
By Lemma \ref{lemma_num_big_jobs}, there are $mk/\gamma$ big jobs and 
we delay the (partial) schedule on one open shop at most $mk^2 / \gamma$ times,
one delay for each operation of big jobs.
This results a feasible schedule for $\mathcal{J}'$ and increases the makespan by at most $mk^2 \gamma$.

This completes the proof.
\end{proof}

\subsection{Description and Analysis for {\sc Dense}}\label{subsec_dense}

In this section,
we introduce several essential concepts,
including \emph{restricted schedule} and \emph{assignment} of jobs. 
The main idea is to 
enumerate all possible restricted schedules for big jobs
and then use the ``gaps'', formed on each machine
between two consecutively scheduled operations,
to fit small jobs ``densely''.
%

\subsubsection{Restricted Schedule and Assignment}

Define a right half-open interval of length $\gamma^2$ as a $\gamma^2$-interval.
The time interval $[0,2)$ can be partitioned into $2/\gamma^2$ consecutive $\gamma^2$-intervals 
$[(t-1)\cdot \gamma^2, t\cdot \gamma^2), t \in [2/\gamma^2]$.
We say a feasible schedule is \emph{restricted} if 
every operation of big operations starts processing at the beginning of some $\gamma^2$-interval.

\begin{lemma}\label{lemma_restricted_schedule}
For any special $P_m(O_k) || C_{\max}$ instance $\mathcal{J}'$, 
the minimum makespan of restricted schedules is bounded above by 
$$C_{\max}^{\pi^*(\mathcal{J}')} + mk^2 \gamma.$$
\end{lemma}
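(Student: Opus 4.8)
The plan is to take an optimal schedule $\pi^*(\mathcal{J}')$ for the special instance and transform it into a \emph{restricted} schedule by shifting the start time of each big operation forward so that it aligns to the beginning of a $\gamma^2$-interval, while bounding the resulting increase in makespan. The key observation is that in $\mathcal{J}'$ every big operation already has processing time equal to a multiple of $\gamma^2$ (guaranteed by property (2) of {\sc Reduce}), so once a big operation is made to start at a grid point $(t-1)\gamma^2$, it also \emph{finishes} at a grid point $t'\gamma^2$. This alignment of both endpoints to the grid is what makes the shifting argument propagate cleanly.

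**First I would** fix an optimal schedule and scan each open shop $S_\ell$ along the time axis, processing its big operations in order of their start times. For each big operation in turn, I would delay it (and everything scheduled after it on that open shop) by the amount needed to push its start time up to the nearest $\gamma^2$-grid point; this delay is strictly less than $\gamma^2$. Because the operation's length is an exact multiple of $\gamma^2$, snapping its start to the grid automatically snaps its completion to the grid as well, so the gaps it leaves for small operations remain grid-aligned and no new misalignment is introduced for later big operations on the same shop. I would count the total number of such shifts: by Lemma~\ref{lemma_num_big_jobs} there are at most $mk/\gamma$ big jobs, each with at most $k$ operations, but the relevant count per open shop mirrors the bookkeeping in Lemma~\ref{lemma_reduce}, yielding at most $mk^2/\gamma$ shifts per machine's worth of big operations, each costing less than $\gamma^2$, for a total added makespan of at most $mk^2\gamma$.

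**The main obstacle** I anticipate is arguing feasibility is preserved — specifically, that shifting big operations to the grid does not create conflicts with the small operations or violate the open-shop constraint that a job's operations on different machines never overlap in time. The safe route is to delay not just the single big operation but the entire remaining suffix of the schedule on that open shop uniformly, exactly as in the proof of Lemma~\ref{lemma_reduce}; a uniform rightward shift of a contiguous tail preserves all relative orderings and all within-job non-overlap constraints, since every operation of every job sharing that shop moves together. One must check that shifts accumulated from different big operations on the same shop compose correctly and that the per-shop total is still bounded by $mk^2\gamma$; since the argument is structurally identical to Lemma~\ref{lemma_reduce}, I would invoke that same counting and simply note that here the shift aligns starts to the grid rather than merely creating rounding space, giving the claimed bound $C_{\max}^{\pi^*(\mathcal{J}')} + mk^2\gamma$.
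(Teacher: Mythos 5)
Your proposal is correct and follows essentially the same approach as the paper's proof: scan each open shop in time order, delay the remaining suffix by less than $\gamma^2$ for each big operation so that it starts at a $\gamma^2$-grid point, and bound the total increase by (number of big operations) $\times\ \gamma^2 \leq (mk^2/\gamma)\cdot\gamma^2 = mk^2\gamma$. Your additional observation that the rounded processing times make completion times also grid-aligned is not needed for the lemma (only start times must align in a restricted schedule), but it does no harm.
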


\begin{proof}
The argument is similar to the proof for the upper bound in Lemma \ref{lemma_reduce}.
Starting from time $0$, 
we scan through the operations following the schedule $\pi^*(\mathcal{J}')$ on each open shop.
For every operation of big jobs, 
we delay the remaining schedule by less than $\gamma^2$ such that 
the operation starts processing at the beginning of some $\gamma^2$-interval.
As there are at most $mk^2 / \gamma$ times of delays,
the makespan is increased by at most  $mk^2 \gamma$.
\end{proof}

By Eq.(\ref{eq_scale}), Lemma \ref{lemma_reduce}, Lemma \ref{lemma_restricted_schedule},
and the definition of $\gamma$,
we consider special $P_m(O_k) || C_{\max}$ instances
and restricted schedules with makespan at most $2$ by default in the following context.

Given a feasible restricted schedule $\pi$, for each big job $J_i \in \mathcal{B}$, 
its \emph{assignment} is defined as $X_i = (\ell, \tau_1, \ldots, \tau_k)$,
where $\ell$ is the index of the open shop to which the job $J_i$ assigned in $\pi$,
and $\tau_j$ is the index of the $\gamma^2$-interval in which the $j$-th operation $O_{i,j}$ of $J_i$ starts processing. 
Then the multi-set of assignments for jobs in $\mathcal{B}$, denoted by $X_{\mathcal{B}} = \{X_i \mid J_i \in \mathcal{B}\}$,
is called an assignment for $\mathcal{B}$.
We say $X_{\mathcal{B}}$ is \emph{associated} with a feasible schedule $\pi$
if $X_{\mathcal{B}}$ can be defined from $\pi$.

\begin{lemma}\label{lemma_big_assignment}
There are at most $m^{mk/\gamma} \cdot (2/\gamma^2)^{mk^2/\gamma}$ distinct assignments for $\mathcal{B}$.
\end{lemma}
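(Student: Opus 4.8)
The plan is to prove this by a direct counting argument, bounding first the number of possible assignments for a single big job and then taking the product over all big jobs. The key observation is that an assignment $X_i = (\ell, \tau_1, \ldots, \tau_k)$ for a big job $J_i$ is determined by finitely many discrete choices, each of which I can count independently: the open shop index $\ell$ ranges over $[m]$, giving $m$ choices, and each start-interval index $\tau_j$ ($j \in [k]$) ranges over the $\gamma^2$-intervals that partition the time horizon. Since by the convention established just before this lemma (via Eq.~(\ref{eq_scale}), Lemma~\ref{lemma_reduce}, and Lemma~\ref{lemma_restricted_schedule}) we restrict attention to restricted schedules with makespan at most $2$, the interval $[0,2)$ splits into exactly $2/\gamma^2$ consecutive $\gamma^2$-intervals, so $\tau_j \in [2/\gamma^2]$ and there are $2/\gamma^2$ choices for each of the $k$ operations.

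Next I would multiply these counts. For one big job the number of distinct assignments is therefore at most
\begin{equation*}
m \cdot \left(\frac{2}{\gamma^2}\right)^{k}.
\end{equation*}
Invoking Lemma~\ref{lemma_num_big_jobs}, which guarantees at most $mk/\gamma$ big jobs, and noting that the assignments of distinct big jobs may be chosen independently of one another, the total number of assignments $X_{\mathcal{B}}$ for the whole set $\mathcal{B}$ is bounded by the product of the per-job counts, i.e.
\begin{equation*}
\left( m \cdot \left(\frac{2}{\gamma^2}\right)^{k} \right)^{mk/\gamma}
= m^{mk/\gamma} \cdot \left(\frac{2}{\gamma^2}\right)^{mk^2/\gamma},
\end{equation*}
where the last equality simply collects the exponents. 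This matches the claimed bound.

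There is no substantive obstacle here; the argument is a routine product-of-choices count. The only two points worth stating carefully are (i) that the number $2/\gamma^2$ of candidate start intervals is finite precisely because we have already reduced to restricted schedules of makespan at most $2$, so that each $\tau_j$ ranges over a bounded index set, and (ii) that treating $X_{\mathcal{B}}$ as a job-indexed tuple over-counts the genuinely distinct multi-sets of assignments, so the product is a legitimate \emph{upper} bound rather than an exact count. Since the lemma only asserts an upper bound, this over-counting is harmless, and no further estimation is needed.
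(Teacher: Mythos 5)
Your proof is correct and follows essentially the same route as the paper: count $m \cdot (2/\gamma^2)^k$ possible assignments per big job and raise this to the power $mk/\gamma$ using Lemma~\ref{lemma_num_big_jobs}. The extra remarks on why $2/\gamma^2$ is the right number of intervals and why over-counting is harmless are fine but not needed beyond what the paper states.
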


\begin{proof}
For each big job, there are at most $m (2/\gamma^2)^k $ distinct assignments.
From Lemma \ref{lemma_num_big_jobs}, the number of big jobs is at most $mk/\gamma$.
Therefore, the number of all possible assignments for $\mathcal{B}$ is no greater than 
$\left( m (2/\gamma^2)^k \right)^{mk/\gamma} = m^{mk/\gamma} \cdot (2/\gamma^2)^{mk^2/\gamma}$.
\end{proof}

We make a guess of $C_{\max}^{\pi^*(\mathcal{J}')}$, denoted by $C$,
which is used to define the assignment for small jobs.
Given an assignment for $\mathcal{B}$, 
we schedule the big jobs on open shops accordingly
with each operation starting processing at the beginning of some $\gamma^2$-interval.
Every machine, say $M_{\ell, j}$, may contain idle intervals, 
each formed between a pair of consecutively scheduled operations.
There may be two more intervals, one between the time 0 and the time starting processing the first operation on $M_{\ell, j}$,
the other one between completing the last operation and the guessed makespan $C$ on $M_{\ell, j}$.
Define such an idle interval as a \emph{gap}.
By Lemma \ref{lemma_num_big_jobs}, there are at most $mk/\gamma + 1$ gaps.
Let $G_{g, \ell, j} = [s_{g, \ell, j}, e_{g, \ell, j}]$ denote the $g$-th gap on the machine $M_{\ell, j}$,
where $s_{g, \ell, j}$ and $e_{g, \ell, j}$ are the starting and ending time of this gap respectively.

Given a feasible restricted schedule $\pi$, an associated assignment for $\mathcal{B}$, 
and an estimate $C$ of the makespan of $\pi$,
we define the \emph{assignment} for each small job $J_i \in \mathcal{S}$ as $X_i = (\ell, \tau_1, \ldots, \tau_k)$ similar to the assignment for a big job.
The difference is that
$\tau_j$ records the index of the gap instead of the $\gamma^2$-interval.
Let $X_{\mathcal{S}} = \{X_i \mid J_i \in \mathcal{S}\}$ denote the assignment for small jobs. 
Obviously, there may be an exponential number of distinct assignments for small jobs, as $|\mathcal{S}|$ may be linear in $n$.
We say $X_{\mathcal{S}}$ is \emph{associated} with $X_{\mathcal{B}}$ and $C$ 
if $X_{\mathcal{S}}$ can be defined from $X_{\mathcal{B}}$ and $C$.

$X_{\mathcal{B}}$ is said to be \emph{feasible} if it is associated with a feasible schedule.
$X_{\mathcal{S}}$ is said to be \emph{feasible} if 
every gap introduced by the associated $X_{\mathcal{B}}$ and $C$ provides enough space for
the operations of jobs $\mathcal{S}$ that are assigned to this gap,
i.e., 
\begin{equation}\label{eq_gap}
e_{g,\ell, j} - s_{g,\ell, j} \geq \sum_{J_i \in \mathcal{S}, X_i = (\ell, \tau_1, \ldots, \tau_j = g, \ldots, k)} p_{i,j}.
\end{equation}

As $C_{\max}^{\pi^*(\mathcal{J}')}$ is unknown, 
it is possible the guess $C$ is not correct. 
As long as $C\geq C_{\max}^{\pi^*(\mathcal{J}')}$, 
there must be a feasible assignment for small jobs.
When the guess $C$ is too small, say $C <  C_{\max}^{\pi^*(\mathcal{J}')}$, 
the defined $X_{\mathcal{S}}$ may be not feasible.
%
We will argue later in Algorithm {\sc Assign} to search 
 for an accurate enough guess such that 
 $C \leq C_{\max}^{\pi^*(\mathcal{J}')} + \epsilon$
 and a feasible assignment exists for most small jobs. 
%
%

\subsubsection{Algorithm {\sc Dense}}

Suppose $X_{\mathcal{B}}$ is associated with a feasible schedule $\pi$.
Assume $X_{\mathcal{S}}$ is also feasible and associated with $X_{\mathcal{B}}$ and $C$,
where $C$ is an estimate of the makespan of $\pi$.
Let $\mathcal{O}_{g, \ell, j}$ denote the set of operations assigned to start processing in the gap $G_{g, \ell, j}$,
$g\in [mk/\gamma + 1], \ell \in [m], j \in [k]$. 
Requiring the operations in $\mathcal{O}_{g, \ell, j}$ either scheduled in the gap $G_{g, \ell, j}$ or left unscheduled, 
we will generate a dense schedule for almost all operations in every gap.
In a dense schedule, 
any machine becomes idle only because there is no operation that is ready to be processed on the machine.
Note that the leftover operations from each set $\mathcal{O}_{g, \ell, j}$ 
will be processed later greedily in a post-processing stage. 

{\sc Dense} schedules operations while scanning through the gaps.
We mark a gap $G_{g, \ell, j}$ as ``scanned'' once
{\sc Dense} cannot find an operation in $\mathcal{O}_{g, \ell, j}$ to fit in the gap $G_{g, \ell, j}$.
Let $\mathcal{G}^{scanned}$ record the scanned gaps and
$\mathcal{O}^{remaining}$ collect the leftover operations from the set $\mathcal{O}_{g, \ell, j}$ 
after finishing scanning each gap $G_{g, \ell, j}$.
Both $\mathcal{G}^{scanned}$ and $\mathcal{O}^{remaining}$ are  initialized by empty sets.

{\sc Dense} (Lines 5-20) repeats the following steps until all gaps have been scanned. 
Every time some machine, say $M_{\ell, j}$, becomes idle at the earliest 
while ignoring the scanned gaps.
We update $T$ as this moment. 
Note that in the case when multiple machines becomes idle at the same time,
we pick on machine arbitrarily and $T$ may stay the same for multiple iterations.
Suppose $T \in [s_{g, \ell, j}, e_{g, \ell, j}]$ for some $g$ without loss of generality.
We check whether there exists an unprocessed operation $O_{i, j} \in \mathcal{O}_{g, \ell, j}$ 
such that no other operation of the corresponding job $J_i$ is currently being processed on some other machine of $S_\ell$. 
If at least one such operations exists, we pick one arbitrarily, start processing it at time $T$,
and remove it from $\mathcal{O}_{g, \ell, j}$.
Otherwise, $G_{g, \ell, j}$ and $\mathcal{O}_{g, \ell, j}$ are added to 
$\mathcal{G}^{scanned}$ and $\mathcal{O}^{remaining}$, respectively.
If $\mathcal{O}_{g, \ell, j}$ becomes empty, $G_{g, \ell, j}$ is also added to $\mathcal{G}^{scanned}$.

After all gaps has been scanned, 
the remaining operations in $\mathcal{O}^{remaining}$ will be processed
after the time $C$ sequentially
such that for each open shop only one machine is actively processing operations at any moment.
Refer to Algorithm {\sc Dense} for the detailed description. 
We will argue in Lemma \ref{lemma_dense} that for each gap most assigned operations will be scheduled in this gap.

\begin{center}
\begin{minipage}{0.9\textwidth}
\begin{algorithm}[H]
\caption*{\sc Dense} 
\vspace{-10pt}
\begin{flushleft}
\textbf{Input:} $C$,
                feasible $X_{\mathcal{B}}$,
                feasible $X_{\mathcal{S}}$;\\
\textbf{Output:}  
                a feasible schedule $\pi'$ with makespan at most $C + mk^2 \gamma + k \gamma^2$.
\end{flushleft}
\vspace{-20pt}
\hrulefill
\begin{algorithmic}[1]
    \State $\mathcal{G}^{scanned} = \emptyset$;
    \State $\mathcal{O}^{remaining} = \emptyset$;
    \State Let $\pi'$ denote the initial partial feasible schedule formed by $X_{\mathcal{B}}$;
    \State Let $T$ be the earliest idle time over all machines;
    \While{$T < C$}   

        \State Suppose $M_{\ell, j}$ is the earliest idle machine at $T$ without loss of generality;
        \State Suppose $T \in [s_{g, \ell, j}, e_{g, \ell, j}]$;
        \If{$\exists ~O_{i,j} \in \mathcal{O}_{g, \ell, j}$ such that 
            $p_{i,j} \leq e_{g, \ell, j} - T$ and \newline \phantom{~~~~~~~} 
            no other operations of $J_i$ is currently being processed on $S_\ell$}
            \State Start processing $O_{i,j}$ at time $T$;
            \State $\mathcal{O}_{g, \ell, j} \leftarrow \mathcal{O}_{g, \ell, j}  - O_{i,j}$; 
            \State Update $\pi'$ accordingly;  
        \Else \Comment Leave the remaining operations to the post-processing stage
            \State $\mathcal{G}^{scanned} \leftarrow \mathcal{G}^{scanned} + G_{g, \ell, j}$;
            \State $\mathcal{O}^{remaining} \leftarrow  \mathcal{O}^{remaining} + \mathcal{O}_{g, \ell, j}$; 
                \Comment $\mathcal{O}_{g, \ell, j}$ may be empty
        \EndIf
        
        \State Update $T$ as the earliest idle time over all machines 
                while skipping all scanned gaps;
       
    \EndWhile

    \State $T = C$; \Comment Start the post-processing stage to schedule $\mathcal{O}^{remaining}$
    \While{$\mathcal{O}^{remaining} \not= \emptyset$}
        \State Pick arbitrarily $\mathcal{O}_{g, \ell, j} \in \mathcal{O}^{remaining}$;
        \State Start processing all operations $\mathcal{O}_{g, \ell, j}$ in any order at time $T$ on $M_{\ell, j}$;
        \State $T \leftarrow T + \sum_{O_{i,j} \in \mathcal{O}_{g, \ell, j}} p_{i,j}$;
        \State $\mathcal{O}^{remaining} \leftarrow  \mathcal{O}^{remaining} - \mathcal{O}_{g, \ell, j}$;
        \State Update $\pi'$ accordingly;
    \EndWhile

    \State \Return a feasible schedule $\pi'$.
\end{algorithmic}
\end{algorithm}
\end{minipage}
\end{center}

\begin{lemma}\label{lemma_dense_key}
For any gap, there are less than $k$ leftover operations after the first While-loop of {\sc Dense}. 
\end{lemma}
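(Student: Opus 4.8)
The plan is to fix an arbitrary gap $G_{g,\ell,j}$ and show that at the instant it is placed into $\mathcal{G}^{scanned}$, at most $k-1$ of its assigned operations are still unscheduled. If the gap is never scanned, or if $\mathcal{O}_{g,\ell,j}$ is already empty when it is scanned, there is nothing to prove, so I would assume the else-branch fires at some time $T^*$, which by the algorithm means that no operation of $\mathcal{O}_{g,\ell,j}$ simultaneously fits in the residual window (i.e. $p_{i,j}\le e_{g,\ell,j}-T^*$) and avoids a same-job conflict on $S_\ell$.

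The first step is to prove that the operations already placed in $G_{g,\ell,j}$ occupy $[s_{g,\ell,j},T^*)$ contiguously, with no internal idle time, so that their total processing time is exactly $T^*-s_{g,\ell,j}$. Here I would use two facts. First, the value $T$ maintained by {\sc Dense} is non-decreasing: starting an operation at $T$ keeps the chosen machine busy until a strictly later moment while leaving every other machine's earliest idle time at least $T$, and skipping a scanned gap can only advance the earliest idle time. Second, all operations of $\mathcal{O}_{g,\ell,j}$ live on the single machine $M_{\ell,j}$, whose idle time in this gap is first registered at $s_{g,\ell,j}$, and the algorithm retires the whole gap the very first time it fails to place an operation. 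Consequently $M_{\ell,j}$ is never left idle inside the gap before $T^*$: each successive operation starts exactly when the previous one completes, giving the packing identity.

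With this identity, feasibility of $X_{\mathcal{S}}$ through Eq.(\ref{eq_gap}) bounds the total size of the leftover operations by $(e_{g,\ell,j}-s_{g,\ell,j})-(T^*-s_{g,\ell,j})=e_{g,\ell,j}-T^*$, the length of the unused tail of the gap. I would then rule out any leftover operation that fails \emph{only} because it does not fit: such an $O_{i,j}$ would satisfy $p_{i,j}>e_{g,\ell,j}-T^*$, yet $p_{i,j}$ is itself part of the leftover total, which is at most $e_{g,\ell,j}-T^*$, a contradiction. Hence every leftover operation must instead be blocked by the conflict condition, i.e. some other operation of the same job $J_i$ is running on $S_\ell$ at time $T^*$.

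The final step is a counting argument. The members of $\mathcal{O}_{g,\ell,j}$ are the $j$-th operations of distinct small jobs, so each leftover operation $O_{i,j}$ corresponds to a distinct job $J_i$ whose conflicting operation $O_{i,j'}$, with $j'\ne j$, is being processed on $M_{\ell,j'}$ at time $T^*$. Two distinct leftover jobs cannot share the same blocking index $j'$, since $M_{\ell,j'}$ runs at most one operation at $T^*$; thus the blocking indices are distinct elements of $[k]\setminus\{j\}$, giving at most $k-1<k$ leftover operations. I expect the main obstacle to be the no-internal-idle claim of the second step: it depends on the global, cross-machine order in which {\sc Dense} advances $T$ and on the fact that a single failure permanently retires the gap, so the monotonicity of $T$ and the ``first-failure'' behaviour must be argued carefully rather than assumed.
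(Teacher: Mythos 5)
Your proof is correct and follows essentially the same route as the paper's: establish that the gap is packed contiguously up to the failure time so only one idle tail remains, invoke the gap-feasibility condition Eq.~(\ref{eq_gap}) to conclude every leftover operation is blocked by a same-job conflict rather than by lack of space, and then pigeonhole the blocking operations onto the $k-1$ other machines of $S_\ell$. You are in fact more careful than the paper on the contiguous-packing and fit-versus-conflict steps, which the paper asserts without elaboration, but the underlying argument is identical.
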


\begin{proof}
A gap, say  $G_{g, \ell, j}$, is added to $\mathcal{G}^{scanned}$
when {\sc Dense} cannot find available operations to fit in this gap 
and all the remaining operations $\mathcal{O}_{g, \ell, j}$, which may be empty, 
will be delayed to be scheduled at the post-processing stage.
Therefore, there is at most one idle interval inside each gap after the first While-loop of {\sc Dense}.

For any gap $G_{g, \ell, j}$, if it has at least one operations remained after {\sc Dense} scans through the time range $[0, C]$,
suppose the idle interval formed inside $[s_{g, \ell, j}, e_{g, \ell, j}]$ after the execution of {\sc Dense}
is $[T_{g, \ell, j}, e_{g, \ell, j}]$.
Because both $X_{\mathcal{B}}$ and $X_{\mathcal{S}}$ are feasible.
By Eq.(\ref{eq_gap}),
every gap has enough space to process all operations assigned to this gap.
Every leftover operation from $\mathcal{O}_{g, \ell, j}$ cannot start processing at the time point $T_{g, \ell, j}$
because
another operation $O_{i,j'}$ belonging to the same job $J_i$ 
    is being processed on another machine of the same open shop.

Assume there are at least $k$ remaining operations in $\mathcal{O}_{g, \ell, j}$ after the first While-loop of {\sc Dense}. 
Each of these operations belongs to a different small job. 
At the time point $T_{g, \ell, j}$, at most $k - 1$ other machines of the open shop $S_\ell$ are busy.
This implies there is at least one of the leftover operations can be scheduled at $T_{g, \ell, j}$,
which is a contradiction.

This proves the lemma.
\end{proof}

\begin{lemma}\label{lemma_dense}
Suppose $X_{\mathcal{B}}$ is associated with a feasible restricted schedule $\pi$
and $X_{\mathcal{S}}$ is feasible and associated with $X_{\mathcal{B}}$ and $C$.
Algorithm {\sc Dense} takes $O(1/\gamma^2 + n^2)$ time and 
returns another feasible schedule $\pi'$ (which may be the same as $\pi$) such that 
$$C_{\max}^{\pi'(\mathcal{B} \cup \mathcal{S})} \leq C + mk^3 \gamma + k^2 \gamma^2.$$ 
\end{lemma}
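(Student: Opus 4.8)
The plan is to establish the three assertions of the lemma separately: that $\pi'$ is feasible, that the running time is $O(1/\gamma^2+n^2)$, and that the makespan does not exceed $C+mk^3\gamma+k^2\gamma^2$. The makespan bound is where the substance lies, and the key structural input will be Lemma \ref{lemma_dense_key}, which caps the number of leftover operations per gap; the other two claims are bookkeeping.

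For feasibility, I would check that $\pi'$ never violates the machine constraint nor the job constraint on top of the base partial schedule formed by $X_{\mathcal{B}}$. In the first While-loop, each operation $O_{i,j}$ is started at a time $T$ inside a gap $G_{g,\ell,j}$ with $p_{i,j}\le e_{g,\ell,j}-T$, so it occupies only idle time on $M_{\ell,j}$, and the explicit test that no other operation of $J_i$ is running on $S_\ell$ rules out job conflicts. Since every gap ends no later than $C$, all operations placed in this loop finish by $C$, so they never collide with the post-processing stage, which begins at $T=C$. In that stage the leftover operations of each open shop are laid out so that at most one machine of that shop is active at any instant, which forbids two operations of a common job from overlapping. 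Hence $\pi'$ is feasible.

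For the running time, I would separate the cost of building the gap structure from the cost of the loops. Constructing the initial partial schedule from $X_{\mathcal{B}}$ and enumerating the $O(mk/\gamma)$ gaps across the $2/\gamma^2$ many $\gamma^2$-intervals costs $O(1/\gamma^2)$. In the first While-loop, each iteration either schedules one of the $O(nk)$ operations or marks one of the $O(mk/\gamma)$ gaps as scanned; locating an available operation in $\mathcal{O}_{g,\ell,j}$ requires a scan of size $O(n)$, while the job-conflict test costs $O(k)=O(1)$. Summing gives $O(n^2+n/\gamma)$, and since $n/\gamma\le\tfrac12(n^2+1/\gamma^2)$ this is absorbed into $O(1/\gamma^2+n^2)$; the post-processing loop touches each leftover operation once and is dominated by this.

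The heart of the argument is the makespan estimate, where I would bound by how much the post-processing stage extends the schedule past $C$. Because the shops run in parallel while keeping at most one machine active within each shop, it suffices to bound the total length of leftover work on a single shop $S_\ell$. The $k$ machines of $S_\ell$ carry at most $mk/\gamma+1$ gaps each by Lemma \ref{lemma_num_big_jobs}, Lemma \ref{lemma_dense_key} guarantees fewer than $k$ leftover operations per gap, and each such operation is $\gamma^2$-small with processing time below $\gamma^2$; multiplying yields a leftover length below $k\cdot(mk/\gamma+1)\cdot k\cdot\gamma^2=mk^3\gamma+k^2\gamma^2$ per shop, and adding this to $C$ gives the claim. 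The main obstacle is getting the constants right: one must count gaps per machine via Lemma \ref{lemma_num_big_jobs}, invoke the per-gap leftover bound of Lemma \ref{lemma_dense_key}, and—critically—charge the post-processing per open shop rather than globally, since a naive global serialization would inflate the additive term by a factor of $m$. Confirming that the first-loop operations genuinely terminate by $C$ and that the parallel-across-shops layout stays feasible are the two points I would treat most carefully.
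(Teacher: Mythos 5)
Your proposal is correct and follows essentially the same route as the paper: feasibility from the construction, the iteration count times per-iteration cost for the running time, and the makespan bound obtained by combining Lemma \ref{lemma_dense_key} (fewer than $k$ leftover operations per gap, each of length below $\gamma^2$) with the count of at most $mk/\gamma+1$ gaps on each of the $k$ machines of a shop, charged per open shop. Your explicit remark that the post-processing must be serialized within each shop but run in parallel across shops (to avoid an extra factor of $m$) is a correct reading of the paper's intent and matches its computation $(mk/\gamma+1)\cdot k\cdot k\gamma^2 = mk^3\gamma + k^2\gamma^2$.
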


\begin{proof}

By Lemma \ref{lemma_dense_key}, for any gap, 
there are less than $k$ leftover operations after the first While-loop of {\sc Dense}.
{\sc Dense} simply processes these leftover operations in any order after time $C$ on $M_{\ell, j}$,
which increases the makespan by at most $(k-1) \gamma^2 < k \gamma^2$.
Considering 
each open shops contains $k$ machines and
each machine has at most $mk/\gamma + 1$ gaps, 
the makespan of $\pi'$ is upper bounded by 
$(mk/\gamma + 1) \cdot k \cdot k \gamma^2 = mk^3 \gamma + k^2 \gamma^2.$

Now we analyze the time complexity of {\sc Dense}.
The first While-loop scans through the gaps and 
tries to fit every gap densely while maintaining the feasibility of the current schedule. 
The first While-loop starts a new iteration after either one operation is assigned to some gap or a gap completes scanning.
As there are at most $n$ small jobs and at most $(mk/\gamma + 1)\cdot mk = m^2k^2/\gamma + mk$ gaps,
the first While-loop has at most $nk + m^2k^2/\gamma + mk$ iterations.
Since each iteration needs $O(1/\gamma + n)$ time, 
the first While-loop takes $O(1/\gamma^2 + n^2)$ time.
The second While-loop simply processes the unprocessed operations from previous gaps sequentially after the time $C$
such that for each open shop
at most one machine is working at any moment,
which obviously takes linear time $O(n)$ and maintains the feasibility of the current schedule.
Therefore, the returned restricted schedule $\pi'$ is feasible
and the overall time complexity is $O(1/\gamma^2 + n^2)$.

\end{proof}

\subsection{Description and Analysis for {\sc Assign}}\label{subsec_assign}

{\sc Dense} is based on the given feasible assignments for big jobs and small jobs.
Nevertheless, there may be an exponential number of distinct assignments for small jobs.
In this section, 
given only $X_{\mathcal{B}}$ associated with a feasible schedule $\pi$,
we first search for an estimate $C$ of the makespan of $\pi$ such that $C \leq C_{\max}^{\pi(\mathcal{J}')} + \epsilon/2$
and then obtain a feasible assignment for most small jobs
by solving a linear program. 
The detailed procedure is described in Algorithm {\sc Assign}.

\begin{center}
\begin{minipage}{0.9\textwidth}
\begin{algorithm}[H]
\caption*{\sc Assign} 
\vspace{-10pt}
\begin{flushleft}
\textbf{Input:} parameter $\epsilon$, $\mathcal{J}'$, $X_{\mathcal{B}}$,
                which is associated with a feasible schedule $\pi$;\\
\textbf{Output:}  
                \begin{itemize}[noitemsep, topsep=0pt]
                \item  an estimate $C$ of the makespan of $\pi$ such that $C \leq C_{\max}^{\pi(\mathcal{J}')} + \epsilon/2$;
                \item  integrally assigned small jobs, denoted by $\mathcal{S}^{one}$;
                \item  fractionally assigned small jobs, denoted by $\mathcal{S}^{fractional}$;
                \item  feasible assignment for $\mathcal{S}^{one}$ associated with $X_{\mathcal{B}}$ and $C$, 
                        denoted by $X_{\mathcal{S}^{one}}$.
                \end{itemize}
\end{flushleft}
\vspace{-20pt}
\hrulefill
\begin{algorithmic}[1]
    \State Binary search for a $C$ value in the range $(0, 2)$ 
                such that the constructed LP has a basic feasible solution 
                and $C \leq C_{\max}^{\pi(\mathcal{J}')} + \epsilon/2$;
    \State Let $y$ be a basic feasible solution to the constructed LP with the chosen $C$ value;
    \State $\mathcal{S}^{one} = \{J_i \in \mathcal{S} \mid \exists ~y_{i, X} = 1\}$;
    \State $X_{\mathcal{S}^{one}} = \{X \mid \forall ~J_i \in \mathcal{S}^{one}, \exists ~y_{i, X} = 1\}$;
    \State $\mathcal{S}^{fractional} = \mathcal{S} \backslash \mathcal{S}^{one}$;
    \State \Return $C$, $\mathcal{S}^{one}$, $\mathcal{S}^{fractional}$, $X_{\mathcal{S}^{one}}$.
\end{algorithmic}
\end{algorithm}
\end{minipage}
\end{center}

Define the binary variable $y_{i, X}$ to indicate whether 
the job $J_i \in \mathcal{S}$ has an assignment $X$ in the gaps formed by $X_{\mathcal{B}}$ and $C$.
We construct a linear program, denoted by LP. 
The first set of constraints in LP makes sure every small job is assigned
and the second set of constraints in LP guarantees 
there is enough space to process all operations assigned to each gap. 
In the constructed LP, the number of non-trivial constraints is only $|{\cal S}| +  m^2k^2/\gamma + k m$  
while the number of variables $(mk/\gamma + 1)^k m |{\cal S}|$ is considerably larger.
Then, any basic feasible solution to LP has at most $|{\cal S}| + m^2k^2/\gamma + k m$ positive values.

\begin{center}
\begin{minipage}{\textwidth}
(LP)
\[
\begin{array}{rcll}
\displaystyle\sum_{X} y_{i, X}  &=      &1,             & \forall J_i \in {\cal S};\\
\displaystyle\sum_{J_i \in {\cal S}, X = (\ell, \tau_1, \ldots, \tau_j = g, \ldots, s_k)} p_{ij} y_{i, X}
                                &\le    &e_{g, \ell, j} - s_{g, \ell, j},    & \forall (g, \ell, j) \in [mk/\gamma + 1] \times [m] \times [k];\\
\displaystyle y                 &\ge    &0.             & 
\end{array}
\]
\end{minipage}
\end{center}

Given a basic feasible solution to LP,
let $\mathcal{S}^{one}$ and $\mathcal{S}^{fractional}$ be the set of small jobs that are assigned integrally and fractionally, respectively. 
From the definition, we have 
$|\mathcal{S}^{one}| + |\mathcal{S}^{fractional}| = |{\cal S}|.$
For each fractionally assigned small job, it corresponds to at least two positive variables.
The number of variables with positive values is at least $|\mathcal{S}^{one}| + 2 |\mathcal{S}^{fractional}|$.
Therefore, $|{\cal S}| + m^2k^2/\gamma + k m \geq |\mathcal{S}^{one}| + 2 |\mathcal{S}^{fractional}| = |{\cal S}| + |\mathcal{S}^{fractional}|$,
which implies
\begin{equation*}
|\mathcal{S}^{fractional}| \leq m^2k^2/\gamma + k m.
\end{equation*}

\begin{lemma}\label{lemma_assign}
Suppose $\mathcal{B}$ is the big job set of $\mathcal{J}'$ and 
$X_{\mathcal{B}}$ is associated with a feasible schedule $\pi$.
Algorithm {\sc Assign} takes  $n^{O(1)}\cdot (1/\gamma)^{O(1)}$ time to return
\begin{itemize}[noitemsep, topsep=0pt]
\item an estimate $C$ of the makespan of $\pi$ such that $C \leq C_{\max}^{\pi(\mathcal{J}')} + \epsilon/2$;
\item integrally assigned small jobs $\mathcal{S}^{one}$ and its feasible assignment  $X_{\mathcal{S}^{one}}$ associated with $X_{\mathcal{B}}$ and $C$;
\item fractionally assigned small jobs $\mathcal{S}^{fractional}$ with $|\mathcal{B}^{fractional}| \leq m^2k^2/\gamma + k m$.
\end{itemize}
\end{lemma}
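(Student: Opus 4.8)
The plan is to verify the three outputs claimed in the lemma in turn, relying on two structural facts: that feasibility of the LP is monotone in the guessed makespan $C$, and that the basic-feasible-solution support bound (derived immediately before the lemma) controls $|\mathcal{S}^{fractional}|$. The bound $|\mathcal{S}^{fractional}| \le m^2k^2/\gamma + km$ requires no further work: a basic feasible solution of the LP has at most as many positive coordinates as there are non-trivial constraints, namely $|\mathcal{S}| + m^2k^2/\gamma + km$; each job of $\mathcal{S}^{one}$ contributes one positive $y_{i,X}$ and each job of $\mathcal{S}^{fractional}$ contributes at least two, so $|\mathcal{S}^{one}| + 2|\mathcal{S}^{fractional}| \le |\mathcal{S}| + m^2k^2/\gamma + km$, and subtracting $|\mathcal{S}^{one}| + |\mathcal{S}^{fractional}| = |\mathcal{S}|$ gives the claim. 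Feasibility of $X_{\mathcal{S}^{one}}$ is equally immediate: restricting the second family of LP constraints to the integral variables can only decrease each left-hand side, so the integrally assigned operations alone still satisfy Eq.(\ref{eq_gap}).

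The heart of the proof is the estimate $C \le C_{\max}^{\pi(\mathcal{J}')} + \epsilon/2$. First I would observe that the only gap whose length depends on $C$ is the last gap on each machine $M_{\ell,j}$, the one ending at $C$; its right-hand side in the LP grows with $C$ while all other data are fixed. Hence if the LP is feasible for some value of $C$ it is feasible for every larger value, so there is a threshold $C^\ast$ above which the LP is feasible, and I claim $C^\ast \le C_{\max}^{\pi(\mathcal{J}')}$. To see this, I would read an integral assignment directly off the schedule $\pi$: since $\pi$ is a feasible restricted schedule and each small operation is processed non-preemptively, that operation lies entirely inside a single gap of the big-job schedule; assigning it to that gap yields a $0/1$ assignment that satisfies Eq.(\ref{eq_gap}) when $C = C_{\max}^{\pi(\mathcal{J}')}$, i.e. a feasible LP solution at that value. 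Binary search over $(0,2)$ therefore locates a feasible $C$ within $\epsilon/2$ of $C^\ast$, giving $C \le C^\ast + \epsilon/2 \le C_{\max}^{\pi(\mathcal{J}')} + \epsilon/2$.

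For the running time I would bound the cost of a single LP solve and the number of search steps separately. The LP has $m(mk/\gamma+1)^k|\mathcal{S}| = n^{O(1)}(1/\gamma)^{O(1)}$ variables and $|\mathcal{S}| + m^2k^2/\gamma + km$ constraints, so a basic feasible solution (or a certificate of infeasibility) can be computed in $n^{O(1)}(1/\gamma)^{O(1)}$ time by any polynomial LP algorithm; narrowing $(0,2)$ to width $\epsilon/2$ takes $O(\log(1/\epsilon))$ iterations, and this logarithmic factor is dominated by the polynomial factors already present, yielding the claimed $n^{O(1)}(1/\gamma)^{O(1)}$ bound. I expect the main obstacle to be the monotonicity-plus-feasibility argument of the second paragraph: one must confirm that varying $C$ only rescales the final gaps, so that feasibility really is monotone, and that the integral assignment read off $\pi$ is legitimate, in particular that no small operation straddles two gaps. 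Everything else—the support-counting bound and the time analysis—is routine bookkeeping.
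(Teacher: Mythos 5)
Your proposal is correct and follows essentially the same route as the paper's proof: the support-counting bound on $|\mathcal{S}^{fractional}|$ from a basic feasible solution, constructing a feasible LP solution directly from the schedule $\pi$ when $C \geq C_{\max}^{\pi(\mathcal{J}')}$, and binary search over $(0,2)$ to pin down $C$ within $\epsilon/2$. The one point you add beyond the paper is the explicit monotonicity of LP feasibility in $C$ (only the final gap on each machine grows with $C$), which the paper leaves implicit but which is indeed needed for the binary search to be well-founded, so this is a welcome clarification rather than a deviation.
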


\begin{proof}
We only need to argue for the time complexity and how to estimate $C$ accurately,
as the other parts of this lemma follows immediately from the previous discussions.

Given $X_{\mathcal{B}}$ and a $C$ value, 
the constructed LP contains $|{\cal S}| +  m^2k^2/\gamma + k m$ non-trivial constraints and  
$(mk/\gamma + 1)^k m |{\cal S}|$ variables.
It takes polynomial time in $n$ and $1/\gamma$, denoted by $n^{O(1)}\cdot (1/\gamma)^{O(1)}$, to solve this LP.

Whenever LP has a basic feasible solution, the conclusions regarding $\mathcal{S}^{one}$ and $\mathcal{S}^{fractional}$ hold.
That is, most small jobs can be assigned integrally. 
Due to the second set of constraints of LP, the assignment for $\mathcal{S}^{one}$ is feasible. 
Whether LP has a feasible solution can be used to tell whether the input $C$ value 
is able to generate a feasible assignment for $\mathcal{S}^{one}$. 
On the other hand, when $C \geq C_{\max}^{\pi(\mathcal{J}')}$, 
a feasible solution to LP can be easily constructed from the feasible schedule $\pi$
and thus a feasible assignment for $\mathcal{S}^{one}$ can be guaranteed.
Therefore, binary searching a value for $C$ in the range $(0, 2)$ gives us a desired precision of $C$ after $\log 1/\epsilon$ guesses.
The overall time complexity of {\sc Assign} is $n^{O(1)}\cdot (1/\gamma)^{O(1)}$.
\end{proof}

\subsection{Description and Analysis for Our EPTAS}\label{subsec_eptas}

\begin{center}
\begin{minipage}{0.9\textwidth}
\begin{algorithm}[H] \label{ptas}
\caption*{\sc Reduce-Assign-Dense-Greedy}
\vspace{-10pt}
\begin{flushleft}
\textbf{Input:} a $P_m(O_k) || C_{\max}$ instance $\mathcal{J}$, parameter $\epsilon$;\\
\textbf{Output:} a feasible schedule $\pi$ with makespan at most $(1+\epsilon) \cdot C_{\max}^{\pi^*(\mathcal{J})}$.
\end{flushleft}
\vspace{-20pt}
\hrulefill
\begin{algorithmic}[1]
    \State $\gamma$, $\gamma^2$-big operations $\mathcal{O}$, a special instance $\mathcal{J}'$ 
                $\leftarrow \mbox{{\sc Reduce}($\mathcal{J}$, $\epsilon$)}$;
    \State $\pi$ is initialized by None;
    \State Set $C_{\max} = \infty$;
    \For{any assignment $X_{\mathcal{B}}$ for big jobs}
        \State $C$, $\mathcal{S}^{one}$, $\mathcal{S}^{fractional}$, $X_{\mathcal{S}^{one}}$
                $\leftarrow \mbox{\sc Assign}(\epsilon, X_\mathcal{B}, \mathcal{J}')$;
        \State a feasible restricted schedule $\pi'$ for $\mathcal{B} \cup \mathcal{S}^{one}$
                $\leftarrow \mbox{\sc Dense}(X_{\mathcal{B}}, X_{\mathcal{S}^{one}}, C)$;     
        \State  Divide $\mathcal{S}^{fractional}$ evenly into $m$ partitions;
        \State Schedule each partition at the end of one open shop ``sequentially'';
        \If{$C_{\max} > C_{\max}^{\pi'}$}
            \State $\pi \leftarrow \pi'$;
            \State $C_{\max} \leftarrow C_{\max}^{\pi}$;
        \EndIf
    \EndFor
    \State Schedule the $\gamma^2$-big operations $\mathcal{O}$ at the end of open shops ``sequentially'';
    \State Update $\pi$ accordingly
    \State \Return the final schedule $\pi$
\end{algorithmic}
\end{algorithm}
\end{minipage}
\end{center}

Now we are ready to provide a detailed description and analysis for our EPTAS.
{\sc Reduce-Dense-Assign-Greedy} takes in any $P_m(O_k) || C_{\max}$ instance $\mathcal{J}$ and any small number $\epsilon \in (0, 1)$.
First, {\sc Reduce} is invoked to find an appropriate value of $\gamma$, 
        categorize jobs  $\mathcal{J}$ into a set  $\mathcal{B}$  of big jobs and a set  $\mathcal{S}$ of small jobs,
        obtain a set $\mathcal{O}$ of $\gamma^2$-big operations from small jobs, 
        and construct a special instance $\mathcal{J}'$.

Then, enumerating all possible assignments of big jobs and 
we repeat the following procedure for each assignment $X_{\mathcal{B}}$ 
to find a feasible restricted schedule $\pi$ with the minimum makespan.
For each assignment $X_{\mathcal{B}}$, 
{\sc Assign} guesses an estimate $C$ of the makespan of an associated feasible schedule
and generates (integral) assignments for most small jobs via solving a linear program.
$\mathcal{S}^{one}$ and $\mathcal{S}^{fractional}$ denote
the integrally and fractionally assigned small jobs. 
As the (integral) assignments for jobs in $X_{\mathcal{B}} \cup \mathcal{S}^{one}$ are known, 
{\sc Dense} is called to return a feasible restricted schedule for $X_{\mathcal{B}} \cup \mathcal{S}^{one}$.
The fractionally assigned small jobs $\mathcal{S}^{fractional}$ is then partitioned evenly into $m$ subsets,
each of which is greedily scheduled to the end of one open shop in a sequential manner
such that the open shop is dedicated to process only one job at any moment.

Finally, using the previously obtained feasible restricted schedule $\pi$ for the special instance $\mathcal{J}'$,
we construct a feasible schedule for the original instance $\mathcal{J}$.
We simply schedule each $\gamma^2$-big operation with its original processing time to the end of the open shop, 
to which its associated job is assigned,
such that the open shop is dedicated to process only one operation at any moment.

\begin{theorem}\label{thm_ptas}
Algorithm {\sc Reduce-Assign-Dense-Greedy} is an EPTAS. 
\end{theorem}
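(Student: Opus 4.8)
The plan is to establish two things: first, that the algorithm runs in EPTAS time, namely $f(1/\epsilon)\cdot n^{O(1)}$; and second, that the makespan of the returned schedule is at most $(1+\epsilon)\cdot C_{\max}^{\pi^*(\mathcal{J})}$. For the running time, I would simply compose the per-procedure bounds already established. By Lemma~\ref{lemma_reduce}, {\sc Reduce} costs $O(n/\epsilon)$ and fixes $\gamma$. The outer for-loop enumerates all assignments $X_{\mathcal{B}}$ for big jobs; by Lemma~\ref{lemma_big_assignment} there are at most $m^{mk/\gamma}\cdot(2/\gamma^2)^{mk^2/\gamma}$ of them, which depends only on $m$, $k$, and $\gamma$, hence only on $1/\epsilon$. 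Inside each iteration, {\sc Assign} costs $n^{O(1)}\cdot(1/\gamma)^{O(1)}$ by Lemma~\ref{lemma_assign} and {\sc Dense} costs $O(1/\gamma^2+n^2)$ by Lemma~\ref{lemma_dense}; the greedy post-processing is linear. Since $\gamma=\delta^{2^x}$ with $\delta=\epsilon/(14mk^3)$, every $\gamma$-dependent factor is a function of $1/\epsilon$ alone, so the total is of the form $f(1/\epsilon)\cdot n^{O(1)}$, confirming the EPTAS time bound. I would note explicitly that the enumeration factor multiplies rather than enters the exponent of $n$, which is exactly what distinguishes an EPTAS from a mere PTAS.

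For the approximation guarantee, the plan is to chain the additive error terms accumulated across the four procedures and then show their sum is at most $\epsilon\cdot C_{\max}^{\pi^*(\mathcal{J})}$. I would track the optimum through each transformation: {\sc Reduce} inflates the optimal makespan by at most $mk^2\gamma$ (Lemma~\ref{lemma_reduce}); restricting to $\gamma^2$-interval-aligned schedules adds another $mk^2\gamma$ (Lemma~\ref{lemma_restricted_schedule}); for the correct assignment $X_{\mathcal{B}}$ associated with an optimal restricted schedule, {\sc Assign} finds an estimate $C\le C_{\max}^{\pi^*(\mathcal{J}')}+\epsilon/2$ (Lemma~\ref{lemma_assign}); and {\sc Dense} produces a feasible schedule for $\mathcal{B}\cup\mathcal{S}^{one}$ with makespan at most $C+mk^3\gamma+k^2\gamma^2$ (Lemma~\ref{lemma_dense}). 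The remaining two error sources must be bounded separately and are the crux of the argument: the greedy scheduling of the fractionally assigned jobs $\mathcal{S}^{fractional}$, and the greedy appending of the $\gamma^2$-big operations $\mathcal{O}$.

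The main obstacle is bounding the extra makespan incurred by these two greedy steps, since neither is controlled by a prior lemma in closed form. For $\mathcal{S}^{fractional}$, I would use $|\mathcal{S}^{fractional}|\le m^2k^2/\gamma+km$ together with the fact that every small job has total processing time at most $k\gamma$ (each of its $k$ operations is below $\gamma$); spreading these jobs evenly across $m$ open shops and processing each sequentially adds at most roughly $(1/m)(m^2k^2/\gamma+km)\cdot k\gamma = mk^3+k^2\gamma$ to the makespan, which I must verify is $O(\epsilon)$-small after the $\gamma$ substitution --- this is where the precise choice $\delta=\epsilon/(14mk^3)$ is calibrated, so I would check the constant carefully. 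For $\mathcal{O}$, I would invoke the guard $L(\gamma)\le\delta$ from {\sc Reduce}: the total processing time of all $\gamma^2$-big operations is at most $\delta$, and appending them sequentially (at most one active machine per open shop) contributes at most $\delta$ beyond the estimate, again absorbable into the budget.

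Finally, I would sum all additive terms --- $2mk^2\gamma+mk^3\gamma+k^2\gamma^2$, the $\mathcal{S}^{fractional}$ contribution, $\delta$, and the $\epsilon/2$ from $C$ --- and, using $\gamma\le\delta$ and $\delta=\epsilon/(14mk^3)$ together with the normalization $C_{\max}^{\pi^*(\mathcal{J})}\ge 1/(2k)$ from Eq.(\ref{eq_scale}), show the total is at most $\epsilon\cdot C_{\max}^{\pi^*(\mathcal{J})}$. The subtle point I expect to belabor is that the additive errors are constants in the \emph{normalized} scale, so converting them into a \emph{multiplicative} $(1+\epsilon)$ guarantee requires dividing by the lower bound on the optimum; I would make sure the denominator $1/(2k)$ and the factor $14mk^3$ in $\delta$ interact correctly so the bound closes with room to spare.
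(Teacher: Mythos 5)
Your running-time analysis and the overall chaining strategy match the paper's proof. But there is a genuine gap in the one step you correctly identified as the crux: the contribution of $\mathcal{S}^{fractional}$. You bound each small job's total processing time by $k\gamma$ (each operation below $\gamma$), which gives an added makespan of roughly $(mk^2/\gamma + k)\cdot k\gamma = mk^3 + k^2\gamma$ per open shop. The term $mk^3$ is a \emph{constant independent of $\epsilon$}, so this does not become $O(\epsilon)$-small under any substitution for $\gamma$ --- the budget does not close, and no calibration of $\delta$ can rescue it. The fix is to use the stronger property guaranteed by {\sc Reduce}: in the reduced instance $\mathcal{J}'$, every operation of a small job is $\gamma^2$-small (the $\gamma^2$-big operations have been zeroed out and set aside in $\mathcal{O}$), so each small job in $\mathcal{J}'$ has total processing time at most $k\gamma^2$, not $k\gamma$. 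This is exactly what the paper uses, yielding $(mk^2/\gamma + k)\cdot k\gamma^2 = mk^3\gamma + k^2\gamma^2 = O(\delta)$, which is absorbable. Your treatment of the $\gamma^2$-big operations via $L(\gamma)\le\delta$ is correct and matches the paper.

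One further remark: your worry about converting the additive error into a multiplicative $(1+\epsilon)$ via the lower bound $C_{\max}^{\pi^*}\ge 1/(2k)$ is well placed --- the paper's own proof in fact only establishes the additive bound $C_{\max}^{\pi(\mathcal{J})}\le C_{\max}^{\pi^*(\mathcal{J})}+\epsilon$ and stops there, which strictly gives a ratio of $1+2k\epsilon$ and requires a rescaling of $\epsilon$ by a factor depending on $k$ (harmless for an EPTAS, but worth stating). So on that point you are being more careful than the paper; the fatal issue is only the $k\gamma$ versus $k\gamma^2$ bound above.
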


\begin{proof}
We first analyze the time complexity for {\sc Reduce-Assign-Dense-Greedy}.
By Lemma \ref{lemma_reduce}, {\sc Reduce} takes $O(n/\epsilon)$ time.
Lemma \ref{lemma_big_assignment} states
there are at most $m^{mk/\gamma} \cdot (2/\gamma^2)^{mk^2/\gamma}$ distinct assignments for $\mathcal{B}$,
which bounds the number of iterations of the For-loop.
Each iteration calls both {\sc Assign} and {\sc Dense},
taking $n^{O(1)} (1/\gamma)^{O(1)}$ and $O(1/\gamma^2 + n^2)$ time respectively.
Greedily scheduling the jobs in $\mathcal{S}^{fractional}$ or the $\gamma^2$-big operations $\mathcal{O}$ takes constant time.
Therefore, the overall time complexity of {\sc Reduce-Assign-Dense-Greedy} 
is $n^{O(1)} (1/\gamma)^{O(1/\gamma)}$,
where the term $(1/\gamma)^{O(1/\gamma)}$ is a function of $\epsilon$ by the definition of $\gamma$.

When we greedily schedule the jobs in $\mathcal{S}^{fractional}$ or the $\gamma^2$-big operations $\mathcal{O}$ to the end of open shops,
only one machine is active on every open shop and therefore the feasibility of the constructed schedule is maintained. 
Then the feasibility of the returned schedule follows from Lemma \ref{lemma_dense} and Lemma \ref{lemma_assign}.
Next, we estimate the quality of the returned schedule or the makespan.

By Lemma \ref{lemma_reduce} and Lemma \ref{lemma_restricted_schedule}, 
the optimal restricted schedule for the constructed special instance $\mathcal{J}'$ has a makespan bounded by 
\begin{equation}\label{eq_reduce_restrict}
C_{\max}^{\pi^*(\mathcal{J}')} \leq C_{\max}^{\pi^*(\mathcal{J})} + 2mk^2\gamma.
\end{equation}
Enumerating all possible assignments for big jobs,
including the one associated with the optimal restricted schedule $\pi^*(\mathcal{J}')$,
{\sc Reduce-Assign-Dense-Greedy} chooses the feasible schedule with the minimum makespan. 
Without loss of generality, we assume $X_{\mathcal{B}}$ is associated with $\pi^*(\mathcal{J}')$ in the sequel. 
From Lemma \ref{lemma_dense} and Lemma \ref{lemma_assign},
the makespan of $\pi$ when scheduling only $\mathcal{B} \cup \mathcal{S}^{one}$ is at most  
\begin{equation}\label{eq_dense_assign}
C_{\max}^{\pi(\mathcal{B} \cup \mathcal{S}^{one})} \leq C + mk^3 \gamma + k^2 \gamma^2
\leq C_{\max}^{\pi^*(\mathcal{J}')} + \epsilon/2 + mk^3 \gamma + k^2 \gamma^2.
\end{equation}

By Lemma \ref{lemma_assign}, $|\mathcal{B}^{fractional}| \leq m^2k^2/\gamma + k m$
and therefore at most $mk^2/\gamma + k $ small jobs are assigned to each open shop.
We greedily schedule these small jobs at the end of the current schedule $\pi(\mathcal{B} \cup \mathcal{S}^{one})$ in a sequential manner 
such that at most one machine is active on every open shop at any moment.
As the total processing time of a small job is at most $k \gamma^2$,
the makespan increases by at most
$(mk^2/\gamma + k) \cdot k \gamma^2$.
That is,
\begin{equation}\label{eq_greedy1}
C_{\max}^{\pi(\mathcal{J}')} \leq C_{\max}^{\pi(\mathcal{B} \cup \mathcal{S}^{one})} + (mk^2/\gamma + k) \cdot k \gamma^2.
\end{equation}

To obtain a feasible schedule for the original instance $\mathcal{J}$, 
we schedule the $\gamma^2$-big operations $\mathcal{O}$ at the end of $\pi(\mathcal{J}')$ sequentially.
As the total processing time of operations in $\mathcal{O}$ is at most $\delta$,
the makespan increases by at most $\delta$.
That is, 
\begin{equation}\label{eq_greedy2}
C_{\max}^{\pi(\mathcal{J})} \leq C_{\max}^{\pi(\mathcal{J}')} + \delta.
\end{equation}

Combining Eq.(\ref{eq_reduce_restrict}), Eq.(\ref{eq_dense_assign}), Eq.(\ref{eq_greedy1}),
and Eq.(\ref{eq_greedy2}),
the makespan of the returned schedule can be estimated as follows.
\begin{align*}
C_{\max}^{\pi(\mathcal{J})} 
& \leq C_{\max}^{\pi^*(\mathcal{J})} + 2mk^2\gamma + \epsilon/2 + mk^3 \gamma + k^2 \gamma^2 + (mk^2/\gamma + k) \cdot k \gamma^2  + \delta \\
& \leq C_{\max}^{\pi^*(\mathcal{J})} + \epsilon/2 + 7 mk^3 \delta \\
& \leq C_{\max}^{\pi^*(\mathcal{J})} + \epsilon, 
\end{align*}
where the last two inequalities are because of $\gamma = \delta^{2^x} \leq \delta = \frac{\epsilon}{14 mk^3} < 1$.

This completes the proof.
\end{proof}

\section{Conclusion}\label{sec_conclusion}

We investigate the parallel multi-stage open shops (denoted by $P_m(O_k) \mid\mid C_{\max}$) problem,
a meaningful hybrid of the classic open shop scheduling and parallel machine scheduling,
which takes in multiple identical open shops and schedules jobs to achieve the minimum makespan. 
Note that shifting a job between open shops and preemptive processing an operation are not allowed. 
As $P_m(O_k) \mid\mid C_{\max}$ can be treated as 
the parallelized version of the classic open shop scheduling problem,
it inherits the computational complexity directly from the latter problem
and thus it is NP-hard in the weak sense when $k \geq 3$ is a constant.
When both $m$ and $k$ are constant, 
we design an efficient polynomial-time approximation scheme (EPTAS) for $P_m(O_k) \mid\mid C_{\max}$.

There are two interesting open questions that worth further exploration in the future. 
\begin{enumerate}
\item Because the classic open shop scheduling problem is strongly NP-hard when $k \geq 3$ 
        has been open since the 1970s \cite{Woe18}.
        As a generalized variant, whether $P_m(O_k) \mid\mid C_{\max}$ is strongly NP-hard is also unknown. 
        On the other hand, $P_m \mid\mid C_{\max}$ is only weakly NP-hard when $m$ is fixed \cite{GJ79}.
        It is worth investigating whether $P_m(O_k) \mid\mid C_{\max}$ is NP-hard in a strong sense
        for constant $m$ and $k \geq 3$.
        If it is not strongly NP-hard, can we design an FPTAS?
\item When $m$ is part of the input, $P(O_k) \mid\mid C_{\max}$ generalizes 
        the parallel machine scheduling $P \mid\mid C_{\max}$
        and thus inherits the strong NP-hardness. 
        It would be interesting to investigate whether a PTAS exists for $P(O_k) \mid\mid C_{\max}$.
        Perhaps we can approach the problem starting from the special case when $k=2$.
\end{enumerate}

\section*{Acknowledgments}

JD is supported by National Social Science Foundation of China (Grant No. 11971435).
GL is supported by the NSERC Canada.
BS is supported by National Social Science Foundation of China (Grant No. 20XGL023) and Science and Technology Department of Shaanxi Province (Grant No. 2020JQ-654).
WT and YX are supported by the Office of Research, Georgia Southern University.




\bibliographystyle{abbrv}
\bibliography{manuscript-arXiv}





\end{document}